\newtheorem{thm}{Theorem}
\title{An Entropy Maximizing Geohash for Distributed Spatiotemporal Database Indexing}
\author{
Taylor~B.~Arnold \\
AT\&T Labs Research\\
33 Thomas Street\\
New York, NY 10007 \\
\texttt{taylor@research.att.com}
}
\begin{document}

\maketitle

\begin{abstract}
We present a modification of the standard geohash algorithm
based on maximum entropy encoding in which the data volume
is approximately constant for a given hash prefix length.
Distributed spatiotemporal databases, which typically require interleaving
spatial and temporal elements into a single key, reap large benefits
from a balanced geohash by creating a consistent ratio between spatial
and temporal precision even across areas of varying data density.
This property is also useful for indexing purely spatial datasets,
where the load distribution of large range scans is an imporant aspect
of query performance.  We apply our algorithm to data generated
proportional to population as given by census block population
counts provided from the US Census Bureau.
\end{abstract}

%%%%%%%%%%%%%%%%%%%%%%%%%%%%%%%%%%%%%%
\section{Introduction}  \label{sec:intro}

There has been a rapid increase in petabyte-scale data sources across
a range of industry, government, and academic applications. Many of
these sources include data tagged with both geospatial and temporal
attributes. Website HTTP request logs, along with geocoded IP addresses,
are a common example where even modestly sized sites can quickly produce
large-scale datasets. Inexensive embedded GPS devices are another common
source, and are typically found in everything from automobiles to cellphones.
Sensor data from Radio-frequency identification (RFID) devices, which have
been readily adapted to monitor complex supply-chain management operations,
amongst other applications, also generate large datasets with both location
and time-based features.

General purpose distributed filesystems such as the Google Filesystem \cite{ghemawat2003google}
and the Hadoop Filesystem \cite{shvachko2010hadoop}, provide adequate
support for the raw storage of geospatial data sources. These alone however only
provide minimal support for querying file chunks. Analysis, aggregations, and filtering
often require processing through a large subset of the raw data. Database functionality
is provided by additional software, often modelled off of Google's
BigTable \cite{chang2008bigtable} design. Data are stored lexicographically
by a primary key; queries take the form of a scan over a range of contiguous
primary keys along with optional, additional query filters.
Popular open source implementations of this form include Amazon's DynamoDB \cite{decandia2007dynamo},
Apache Accumulo \cite{fuchs2012accumulo}, Apache HBase \cite{taylor2010overview},
and Apache Cassandra \cite{lakshman2010cassandra}.

For geospatial data sources which are typically queried on an entity--by--entity basis, the
standard distributed database applications are straightforward to implement. A uniquely
identifying serial number for each entity can be used as the primary key, with spatial
and temporal filtering accomplished with a combination of in-database filters; a single
entity will typically constitute only a negligable proportion of the overall data volume.
This application model for example would satisfy the needs of a user-facing supply chain
system in which end-users search for the current location of an expected delivery.
Unfortunately, more complex spatial queries do not fit neatly into the distributed database
design. A single primary key is natively one-dimensional, whereas geospatial data
is typically two-dimensional and, with the addition of a temporal component, generally
no less than three. A common method for alleviating the dimensionality problem of having a single primary
index is through the use of a space-filling curves such as Peano, Gosper, and Dragon
curves. A geohash is a particular algorithm which uses a space-filling curve for
encoding latitude and longitude tuples as a single string. It has had fairly wide adoption
for this purpose; for example, it has been included as a core functionality in the popular
distributed document store MongoDB \cite{chodorow2013mongodb}.

The geohash concept adapts well for purely geospatial distributed datasets, allowing for
querying spatial regions of vastly different scales through the use of a single index; larger
regions specify a longer range of geohashes whereas smaller regions specify a shorter range.
A temporal component can be incorporated into the space-filling curve underlying a standard
geohash, however this requires an unavoidable, fixed database design decision as to the
appropriate scaling factor between space and time. For instance: should a square kilometre
bucket in space coorispond to a month, day, or hour bucket in time? The decision has
important performance implications as spatial queries take the form of a collection of
range queries of primary keys. If the buckets are not appropriately sized queries will either
include many false positives (data returned outside of the area of interest) or require
stitching together a large number of small scans. Either of these can cause significant
performance deterioration, due to additional I/O time in the case of false positives and
additional overhead in running a large number of small jobs.

In many cases the question of how to balance space and time into a single key is complicated
by the fact that data are distributed very unevenly over the spatial dimensions. Sources
which are generated roughly proportional to population density will produce 4 orders of
magnitude or more in urban environments compared to rural ones. In these common cases, it
is often more natural to scale the spatial buckets to be proprotional to the data volume.
In this case, we instead only need to decide: should a bucket with 1 million data points
coorispond to month, day, or hour bucket in time? This question will typically have a more
natural answer for a given database system than the previous one, particularly when the
query results are used as inputs to complex algorithms such as recommender systems or
anomaly detection routines.

In order to implement a data balanced alternative to the standard space filling curves,
we present a novel entropy maximizing geohash. Our method is necessarily model based and
learned from a small subset of spatial data. It is robust both to random noise and
model drift. It can implemented with only a modest increase in computational complexity
and is generic enough to be implemented as-is in conjuction with several recently
proposed geohash based schemes for storing large geospatial--temporal datasets. As a
side effect of the entropy based encoding it has also reduces the size of the raw database
files; this has been observed even after applying aggressive compression techniques.

The remainder of this article is organized as follows: Section~\ref{sec:form}
provides a specific formulation of the problems we are trying to solve and the
deficiencies of addressing them using a the standard geohash.
Section~\ref{sec:related} provides a brief literature review of geospatial
database techniques with a particular focus on how these can be used in conjuction
with our methods.
In Section~\ref{sec:geohash} we give a thorough mathematical description of the standard
geohash algorithm and present the formulation for entropy maximizing geohash.
In particular, a theoretical result establishes an upper bound on how well the
maximized geohash can be learnt from a sample of spatial data. Section~\ref{sec:census} applies the
entropic geohash to data from the US Census Bureau, empirically
demonstrating how robust the method is to model drift and Section~\ref{sec:future}
provides avenues for future extensions.
%Section~\ref{sec:hbase} gives results from an implemention of the geohash in HBase
%using synthetic data generated from the Census bureau and Section~\ref{sec:future}

%%%%%%%%%%%%%%%%%%%%%%%%%%%%%%%%%%%%%%
\section{Problem Formulation} \label{sec:form}

Consider partitioning an area of space into equally sized square boxes. If these
regions were given a unique identifier, a basic geospatial-temporal index can be constructed
by concatinating the box id with a timestamp. Querying a region in spacetime, assuming
that the time dimension of the query is small compared to the time window of the database,
would be done by calculating a minimal covering of the spatial region with the square boxes.
Then a range query could be placing for each box id between the desired start and end timestamps.
Several inefficencies may result from this method:
\begin{enumerate}
\item If the boxes are too small relative the query area, the number of queries (each
of which will lead to increased overhead) may be very large. While these can be placed
in parallel, there will eventually be an I/O bottleneck on the machines storing
the database when the number of parallel queries is large.
\item If the query area is smaller than the size of a box, the resultant queries
will have a large number of false positives: points returned which are not in the
desired query region. This leads to significant increases in both computational and I/O costs.
\item Even when the number of covering boxes is neither too small nor too large, the parallel
execution of the queries over each box will not necessarily be balanced. The speedup will be
sub-linear, with the query time driven by the densest box which commonly has an order of
magnitude more data than the average data volume per block.
\end{enumerate}
Issues (1) and (2) are dual to one another, and if all queries have similarly sized
spatial regions the database can be constructed to use boxes with an area optimized
for these queries. However, in the typical case that queries of varying sizes must be
supported, any choice will run into one of these problems. Regardless, the load
balancing associated with (3) continues to be a problem for any dataset with significant
spatial bias.

As a solution to these problems, we propose a method for partitioning space as regions
with equal data volumes rather than boxes with equally sized volumes. This is done by
modifying the standard geohash algorithm without sacrificing any main benefits of the
geohash algorithm or substantially increasing the computation burden of encoding or
decoding input data points.

%%%%%%%%%%%%%%%%%%%%%%%%%%%%%%%%%%%%%%
\section{Related Work} \label{sec:related}

There is a long history of work on data structures for
optimizing geospatial and geospatial-temporal queries such
as $k$-nearest neighbors and spatial joins. Common examples
include R-trees \cite{guttman1984r},
$R^{*}$-trees \cite{beckmann1990r}, and quadtrees \cite{samet1985storing}.
These algorithms often offer superior performance to simple space filling
curves by replicating the true dimensionality of the problem by an abstraction
of trees structures as linked lists. Unfortunately linked lists do not fit into
the primary key design scheme of the distributed databases mentioned in
the previous section, making these more sophisticated algorithms out of
reach for productionalized systems.

Some research has been done towards producing a distributed database which would
be able to natively implement the geospatial temporal database structures which have been successful
on single machine scale applications. These include VegaGiStore \cite{zhong2012towards},
VegaCache \cite{zhong2013vegacache} and VegaCI \cite{zhong2012distributed}, \cite{zhong2012elastic}.
However at this time these remain merely theoretical as no distribution of any of these
has even been publicly distributed, let alone developed to the maturity needed for a
production system.

The simplier approach of adapting geohashes by interleavings spatial and temporal dimensions
has also been explored. Given that these can be quickly implemented on top of existing industry
standard databases this is the approach we have choosen to augment with our work. Fox et al.
developed a generic approach for blending these two dimensions, along with methods for
extending to lines and polygons and for producing fast queries in Apache Accumulo \cite{fox2013spatio}.
The HGrid data model is a similar application which takes advantage of the specific secondary
indicies and related optimizations present in HBase \cite{han2013hgrid}. Both rely on
specifying the relative resolution of the spatial and temporal dimensions, and are able to
use our entropy maximizing geohash with no major changes.

Finally, we note that the desire for a data balanced design which motivated our work is a
natural parallel to the problems associated with efficent query design in the
related tree-based work. Red–-black trees \cite{bayer1972symmetric} and the
Day–-Stout–-Warren algorithm for binary trees \cite{stout1986tree} are important examples
of load balancing used in relation to tree-based data structures.
In this context, the entropy maximizing geohash can be thought of as the curve filling
analogue to balanced trees.

%%%%%%%%%%%%%%%%%%%%%%%%%%%%%%%%%%%%%%
\section{Entropy Balanced Geohash}  \label{sec:geohash}

\subsection{A Formulation of the Standard Geohash Encoding}

As mentioned, a geohash is a simple scheme for mapping two-dimensional coordinates into a
hierarchical, one-dimensional encoding. It is explained in several other sources, but we
re-construct it here in a format which will be most conducive to generalizations.
The first step is to map latitude and longitude coordinates in a
standard unit square; this is done by the following linear mapping:
\begin{align}
x &= \frac{\text{lon} + 180}{360} \\
y &= \frac{\text{lat} + 90}{180}
\end{align}
This choice is by convention, and any other method for mapping coordinantes
into the unit square is equally viable.

The $x$ and $y$ coordinates need to be expressed in as a binary decimals.
Formally, we define the $x_i \in \{0,1 \}$ and $y_i \in \{0,1\}$
(with the restriction that neither is allowed to have an infinite trailing
tail with all `1's, for uniqueness) such that
\begin{align}
x &= \sum_{i=1}^{\infty} \frac{x_i}{2^i},\\
y &= \sum_{i=1}^{\infty} \frac{y_i}{2^i}.
\end{align}
A geohash representation of $(x,y)$ is constructed by interleaving these
binary digits. The $q$-bit geohash $g_q(\cdot,\cdot)$ can symbolically
be defined as
\begin{align}
g_q(x,y) &:=  \sum_{i=1}^{\lceil q/2 \rceil} \frac{x_i}{2^{2i-1}} +
              \sum_{i=1}^{\lfloor q/2 \rfloor} \frac{y_i}{2^{2i}}.
\end{align}
It is fairly easy to show that the geohash function is monotone increasing
in $q$, with the growth strictly bounded by $2^{-q}$, so that
\begin{align}
0 \leq g_{q+m}(x,y) - g_{q}(x,y) < \frac{1}{2^q}
\end{align}
For all $m$ greater than zero.

\subsection{Entropy}

A geohash is typically used as an index in the storing and querying of large
spatial processes. A simple theoretical model for a stream of spatial data can
be constructed by assuming that each observation is an independent identically
distributed random variable $\mathfrak{F}$ from some distribution over space.
Borrowing a concept from information theory, we can define the entropy of a geohash
over a given spatial distribution by the equation
\begin{align}
H(g_q) &:= -1 \sum_{v \in \mathcal{R}(g_q)} \mathbb{P} \left[ g_q(\mathfrak{F}) = v\right] \cdot
            \log_2 \left\{ \mathbb{P} \left[ g_q(\mathfrak{F}) = v\right] \right\} \label{entropyDef}
\end{align}
Where $\mathcal{R}(g_q)$ is the range of the $q$-bit geohash. It is a standard result
that the entropy of a discrete distribution is maximized by the uniform distribution.
Therefore we can use the entropy per bit as a proxy for how balanced a geohash is
for a given distribution of spatial data.

\subsection{The Generalized Geohash}

As the $q$-bit geohash function is bounded and monotonic in $q$, we can define the infinite
precision geohash, which we denote as simply $g(\cdot, \cdot)$, to be the limit
\begin{align}
\lim_{q \rightarrow \infty} g_q(x,y) &:= g(x,y).
\end{align}
With this continuous format, one can see that if we compose $g$ with an appropriate new function
$h$, the composition $h \circ g(x,y)$ can be thought of as a rescaled version of the
traditional geohash. To be precise, we would like a function $h$ to have the following properties:
\begin{align}
&h: [0,1] \rightarrow [0,1], \label{hDef} \\
&h(0) = 0, \\
&h(1) = 1, \\
&x < y \iff h(x) < h(y). \label{monotoneEq}
\end{align}
Note that Equation~\ref{monotoneEq} implies that $h$ is also continuous. From here, we can define
the analogue to a $q$-bit geohash by truncating the binary representation of $h(z) = w$,
\begin{align}
h(z) &= \sum_{i=1}^{\infty} \frac{w_i}{2^i}
\end{align}
To the its first $q$-bits
\begin{align}
h_q(z) &:= \sum_{i=1}^{q} \frac{w_i}{2^i}.
\end{align}
In the remainder of this paper, we refer to $h_q \circ g(x,y)$ as a {\it generalized geohash}.

\subsection{The Empirical Entropic Geohash}

We have introduced the concept of a generalized geohash in order to construct a spatial encoding
scheme which better optimizes the entropy as defined in Equation~\ref{entropyDef}.
Assume $\{z_i\}_{i=0}^N$ is a set of independent samples from realizations of the random variable
$\mathfrak{F}$. The empirical cumulative distribution function $G$ of the standard geohash function
$g(\cdot, \cdot)$ is given by
\begin{align}
G(t) &:= \frac{1}{N} \cdot \sum_{i=0}^{N} 1_{g(z_i) \leq t}, \, t \in [0,1]. \label{empGeohash}
\end{align}
From Equation~\ref{empGeohash} we can define the {\it entropy maximizing geohash} function
$b$ (balanced), assuming that every point $z_i$ has a geohash $g(z_i)$ strictly between $0$ and
$1$, to be
\begin{align}
b^q(t) :=
  \begin{cases}
    0 &\mbox{if } t = 0\\
    1 &\mbox{if } t = 1\\
    \frac{N}{N+2} \cdot G^{-1}(t) &\mbox{if } \, \exists \, i \in \mathbb{Z} \, \text{s.t.} \, t = i / 2^{q} \\
    \text{linear interpolation of the above points} & \text{else}
  \end{cases} \label{balancedHash}
\end{align}
The balanced geohash is essentialy the inverse of the empirical function $G$, with some minor
variations to satisfy Equations~\ref{hDef}-\ref{monotoneEq}.
If the points $\{z_i\}$ are unique, and $N$ is sufficently large, the $q$-bit analogue $b_q^q$
of Equation~\ref{balancedHash} will have an entropy $H(b_q^q)$ of approximately equal to $q$.

More formally, we can prove the following bound on the entropy of the balanced geohash:
\begin{thm} \label{entropyThm}
Let $b_q^q$ be the entropy balanced geohash estimated from a sample of $N$ unique data points. Then,
with probability at least $1 - 2 e^{-0.49 \cdot 2^{-2q} N \cdot [1 - A]^2}$ the entropy
$H(b_q^q)$ is bounded by the following simultaneously for all values $A \in [0,1]$:
\begin{align}
H(b_q^q) &\geq q \cdot \frac{N}{N+2} \cdot A
\end{align}
\end{thm}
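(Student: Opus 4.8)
The plan is to lower-bound the Shannon entropy by the min-entropy and then reduce everything to a single uniform bound on the largest cell probability. Writing $p_j := \mathbb{P}[\, b_q^q(\mathfrak{F}) = j/2^{q}\,]$ for the true probabilities of the $2^{q}$ output cells, and using $\sum_j p_j = 1$, the elementary chain
\[
H(b_q^q) = \sum_j p_j \log_2 \frac{1}{p_j} \geq \Big( \min_j \log_2 \frac{1}{p_j} \Big) \sum_j p_j = -\log_2 \max_j p_j
\]
shows that it is enough to prove $\max_j p_j \leq 2^{-q \cdot \frac{N}{N+2} \cdot A}$ on a suitable event; taking $\log_2$ then yields the theorem. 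This reduction is attractive because it removes any need to show the cells are individually well populated --- only an upper bound on the heaviest cell is required.

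Next I would read off the empirical cell masses from the construction. Because $b$ is (up to the $\frac{N}{N+2}$ rescaling) the inverse of the empirical CDF $G$, the cell boundaries in geohash space are the empirical quantiles of $\{g(z_i)\}$ taken at equally spaced dyadic levels; equal spacing in quantile level forces \emph{equal} empirical mass per cell, namely $\frac{N+2}{N}\,2^{-q}$ for each occupied cell and $0$ for the rest. This is the step that feeds the factor $\frac{N}{N+2}$ of Equation~\ref{balancedHash} into the final bound, so I would track that constant carefully rather than absorbing it.

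To pass from empirical to true masses I would invoke the Dvoretzky--Kiefer--Wolfowitz inequality for the one-dimensional CDF $F$ of $g(\mathfrak{F})$. For any cell $[t_j, t_{j+1})$,
\[
p_j = F(t_{j+1}) - F(t_j) \leq \big( G(t_{j+1}) - G(t_j) \big) + 2 \sup_t \big| F(t) - G(t) \big|,
\]
so on the event $\{\sup_t |F(t) - G(t)| \leq \epsilon\}$ every cell obeys $p_j \leq \frac{N+2}{N} 2^{-q} + 2\epsilon$. Choosing $\epsilon = (1-A)/2^{q+1}$ gives $2\epsilon = (1-A)2^{-q}$, while DKW bounds the complementary probability by $2 e^{-2 N \epsilon^2} = 2 e^{-0.5 \cdot 2^{-2q} N (1-A)^2} \leq 2 e^{-0.49 \cdot 2^{-2q} N (1-A)^2}$, matching the failure probability in the statement. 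Since DKW controls $F - G$ in the sup-norm, one event simultaneously validates the bound on every cell, and because the events for different $\epsilon$ are nested the whole family of entropy bounds holds simultaneously in $A$, as asserted.

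The part I expect to be genuinely delicate is the final translation, not the probability. The min-entropy route produces a bound of additive shape, $-\log_2 \max_j p_j \geq q - \log_2\!\big[\tfrac{N+2}{N} + (1-A)\big]$, whereas the theorem is stated multiplicatively as $q \cdot \frac{N}{N+2} \cdot A$; I must therefore verify that the additive deficit $\log_2[\tfrac{N+2}{N}+1-A]$ is dominated by the slack $q\,(1 - \frac{N}{N+2}A)$ across the relevant range of $q$ and $N$, and handle the continuity correction together with the empty and boundary cells so that the uniform bound on $\max_j p_j$ really does hold over all $2^q$ cells and not merely the occupied ones. Everything else reduces to the concentration estimate above.
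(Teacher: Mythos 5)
Your reduction to min-entropy is a genuinely different route from the paper's, and the step you yourself flag as ``delicate'' is exactly where it breaks. The paper never looks at $\max_j p_j$: it uses the DKW event to get a \emph{lower} bound $p_i \geq \frac{N}{N+2}2^{-q} - 2\epsilon = \frac{N}{N+2}2^{-q}A$ on every cell, and then sums $-p_i\log_2 p_i \geq -\bigl(\tfrac{N}{N+2}2^{-q}A\bigr)\log_2\bigl(\tfrac{N}{N+2}2^{-q}A\bigr)$ over all $2^q$ cells, using that $-x\log_2 x$ is increasing near $0$. The multiplicative form $q\cdot\frac{N}{N+2}\cdot A$ then falls out for free, since $-\log_2\bigl[\tfrac{N}{N+2}2^{-q}A\bigr] = q - \log_2 A - \log_2\tfrac{N}{N+2} \geq q$ and the prefactor appears $2^q$ times. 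Your route instead throws away the contribution of every cell except the heaviest one and must recover the factor $\frac{N}{N+2}A$ out of the additive deficit $\log_2\bigl[\tfrac{N+2}{N}+1-A\bigr]$; that is a strictly lossier starting point.

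Carrying out the verification you deferred shows the loss is fatal for $q=1$. Writing $\delta = 1-A$, you need $q\bigl(1-\tfrac{N}{N+2}A\bigr) \geq \log_2\bigl[\tfrac{N+2}{N}+\delta\bigr]$; for large $N$ and small $\delta$ the right side is $\approx \delta/\ln 2 \approx 1.443\,\delta$ while the left side is $\approx q\delta + \tfrac{2q}{N+2}$, so the inequality essentially requires $q \geq \log_2 e$. Concretely, for $q=1$, $N=100$, $A=1/2$ the theorem is non-vacuous (the stated probability is about $0.91$) and asks for $H \geq 0.490$, but your bound delivers only $1-\log_2 1.52 \approx 0.396$; indeed for $q=1$ the min-entropy bound falls short of the target for every $A\in[0,1]$ once $N$ is moderate. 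For $q\geq 2$ the verification does go through --- the difference of the two sides is concave in $A$, so it suffices to check $A=0$ and $A=1$ --- but that argument is absent from your write-up and is the entire mathematical content once the (routine) DKW step is done. Two minor points: the empirical mass per cell in the paper's construction is $\frac{N}{N+2}2^{-q}$, not $\frac{N+2}{N}2^{-q}$ (the leftover $\frac{2}{N+2}$ sits in the boundary cells), and your handling of DKW, the choice of $\epsilon$, the constant $0.49$, and the nestedness-in-$A$ remark all match the paper.
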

\begin{proof}
Let $F(\cdot)$ be the true cumulative distribution function of the variable $\mathfrak{F}$,
and $F_N(\cdot)$ be the empirical cumulative distribution function from a sample of $N$
independent observations. Setting $\epsilon = [1 - A] \cdot \frac{N}{N+2} \cdot 2^{-(q+1)}$,
the Dvoretzky–-Kiefer–-Wolfowitz inequality states \cite{dvoretzky1956asymptotic} that
the following holds for all values
$A \in [0,1]$ with probability $1 - e^{-2N\epsilon}$:
\begin{align}
|F(x) - F_n(x)| \leq [1 - A] \cdot \frac{N}{N+2} \cdot 2^{-(q+1)}
\end{align}
Therefore, the empirical entropy should be bounded as follows:
\begin{align}
H(b_q^q) &\geq -1 \cdot \sum_{i=0}^{2^q-1} \left[F(i/2^q) - F((i+1)/2^q)\right] \times
  \log_2 \left[F(i/2^q) - F((i+1)/2^q)\right] \\
&\geq -1 \cdot \sum_{i=0}^{2^q-1} \left[F_n(i/2^q) - F_n((i+1)/2^q) - 2\epsilon \right] \times
  \log_2 \left[F_n(i/2^q) - F_n((i+1)/2^q) - 2\epsilon \right] \\
&= -2^{q} \cdot \left[\frac{N}{N+2} \cdot \frac{1}{2^q} - 2 \epsilon \right] \times
  \log_2 \left[\frac{N}{N+2} \cdot \frac{1}{2^q} - 2 \epsilon  \right] \\
&= -1 \cdot \frac{N}{N+2} A \times \log_2 \left[\frac{N}{N+2} \cdot \frac{1}{2^q} \cdot A  \right] \\
&\geq q \cdot \frac{N}{N+2} A
\end{align}
Which, plugging in the appropriate $\epsilon$ into the probablity bound, yields the desired result.
\end{proof}
Plugging in $A=2/3$, the bound in Theorem~\ref{entropyThm} holds with probability greater than $0.99$
for a 5-bit balanced geohash when $N$ is at least $1e5$ and for a 10-bit geohash when $N$ is at
least $1e8$. We will see in the our empirical examples that the rate of convergence of the entropy
to $q$ is typically much faster.

%%%%%%%%%%%%%%%%%%%%%%%%%%%%%%%%%%%%%%
\section{Census Data Example}  \label{sec:census}

\begin{figure}
\centering
\includegraphics[width=\textwidth]{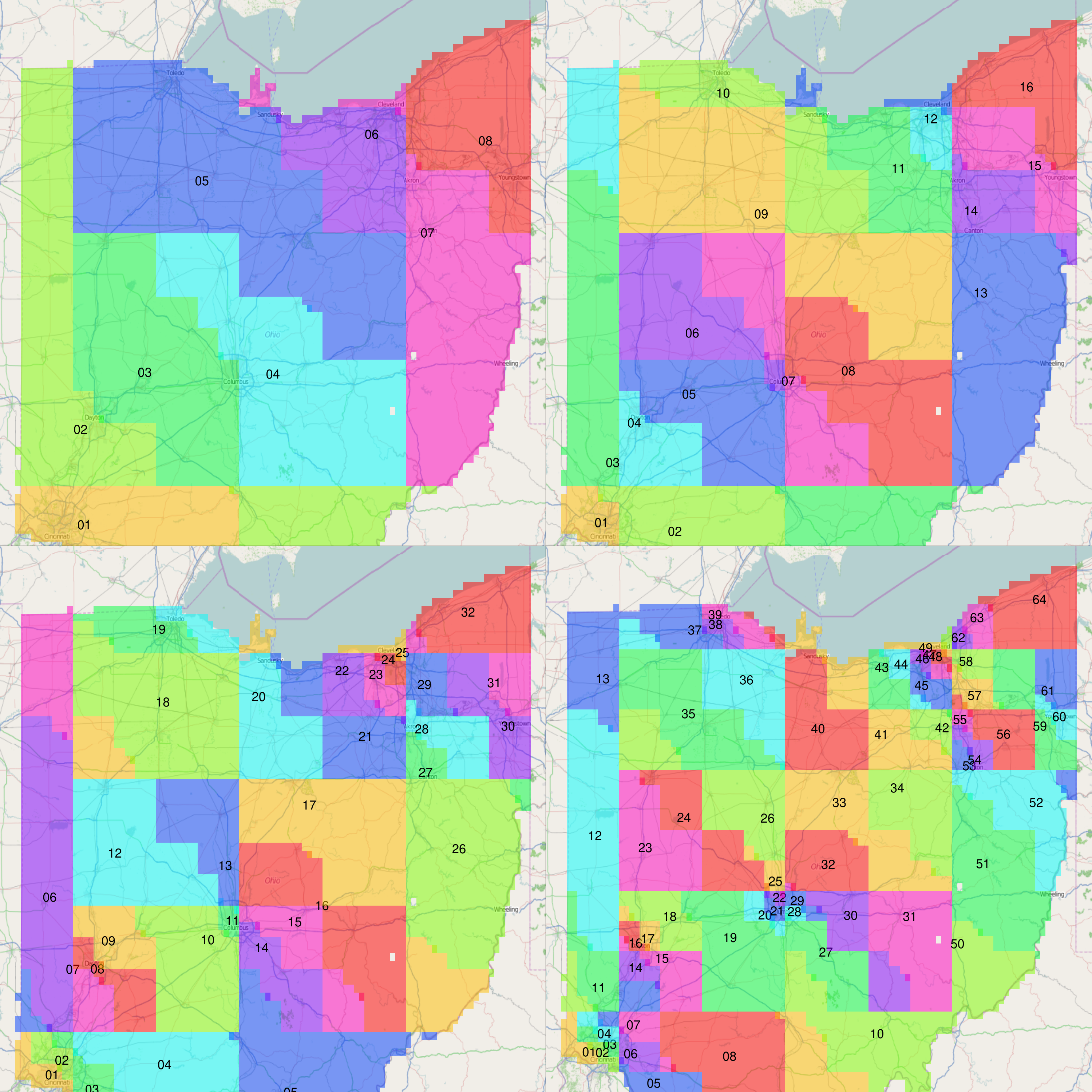}
\caption{Example entropy balanced geohash buckets (those which have a common prefix)
of $3$,$4$,$5$, and $6$-bits learned from population data in the state of Ohio.
Numbers are sequentially ordered from the smallest to the largest geohash buckets,
and the buckets were truncated to the state boundaries to increase readability
even though they theoretically extend outside as well.}
\label{fig:ohioHash}
\end{figure}

\begin{figure}
\centering
\includegraphics[width=6in]{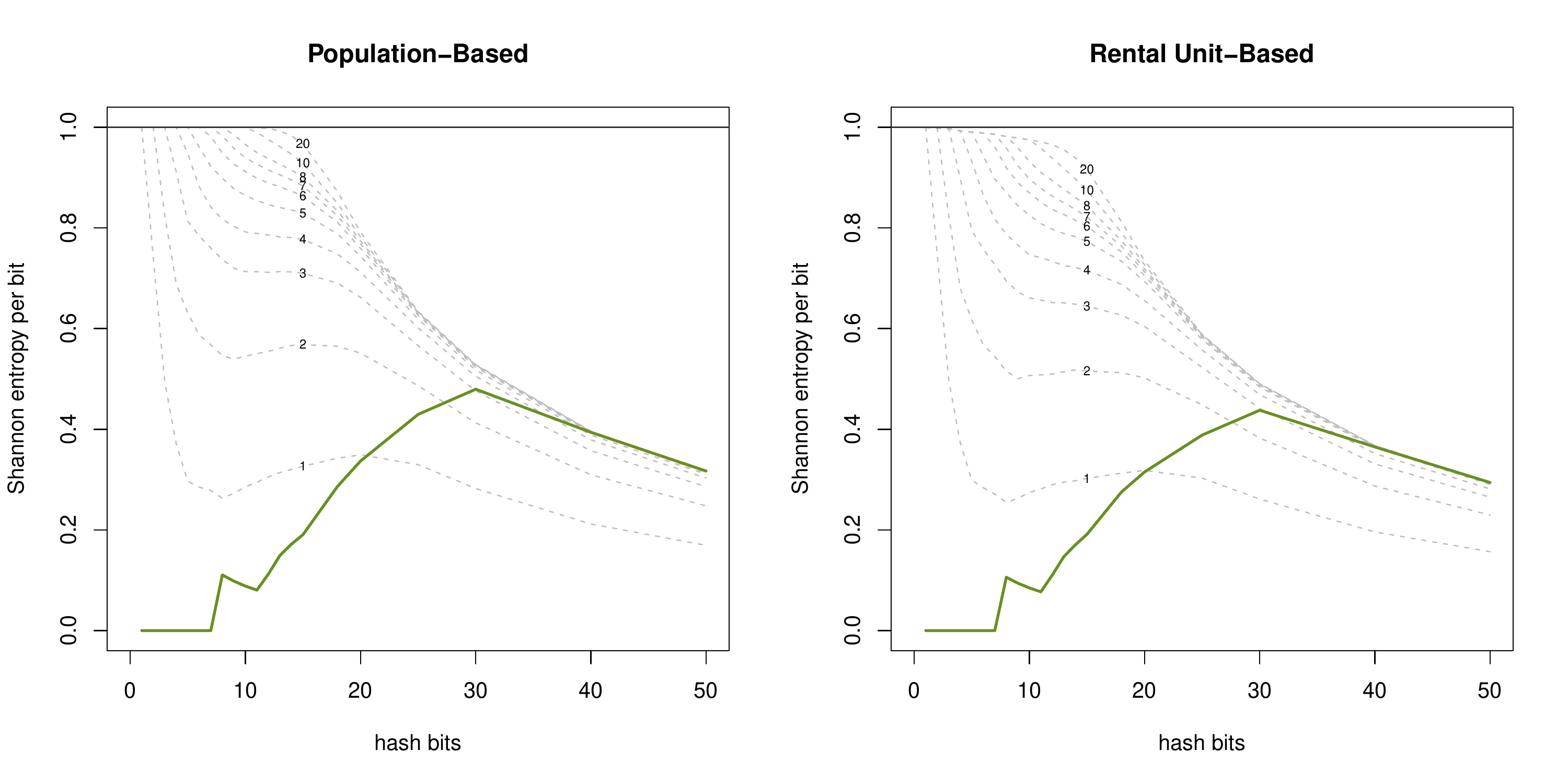}
\caption{Average entropy for population and rental household data in Ohio. The green
line is the standard geohash; dashed lines are balanced geohashes, and the overlaid
numbers give the number of bits to which is is balanced. The rental unit balanced
geohashes are balanced on population data rather than the rental data.}
 \label{fig:ohioEntropy}
\end{figure}

\begin{figure}
\centering
\includegraphics[width=6in]{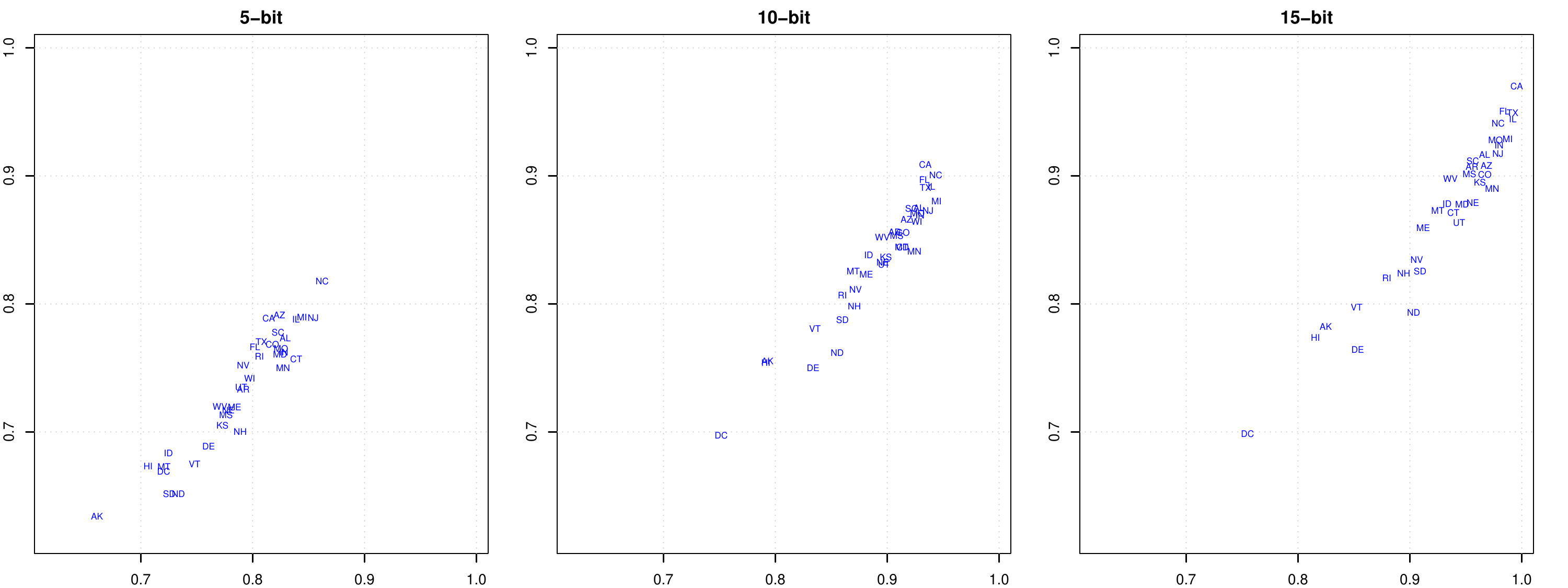}
\caption{Plots of entropy per bit for population (x-axis) versus rental units (y-axis)
for 5,10, and 15-bit balanced geohashes. Only 34 states (and the District of Columbia)
are shown because overplotting otherwise made the image too difficult to read.}
 \label{fig:allStatesEntropy}
\end{figure}

Many large geospatial datasets are distributed roughly proportional to population. Log files
geocoded by ip addresses for instance will tend to have more activity in dense urban areas
and less in rural regions. We use aggregate count data from the US Census data as a natural
example for exploring the balanced geohash because these counts are also proportional to
population (exactly equal, in fact). The Census Bureau distributes decennial aggregate
counts across thousands of attributes at various levels of grainularity. The smallest region
with un-sampled counts are
census blocks, small geographic regions containing the residences of typically somewhere
between 600 and 1300 people. There were over 11 million such regions in the 2010 Census. We
use census blocks here in order to have the maximum level of geospatial grainularity available.
Due to the smaller aggregation size, only a small subset of summary statistics are provided
for census blocks. The total population will be used to construct the balanced geohash, with
the total number of households renting their primary residence (contrasted with owning) being
used to show the relative robustness of the balanced hash.

To help visualize the balanced geohash, the entropy balanced hash up to $3$,$4$,$5$, and
$6$-bits were calculated from the population data in the state of Ohio. In Figure~\ref{fig:ohioHash}
the geohash `buckets', all points in space with a common prefix, are illustrated over a
map of the state. Technically the buckets extend over the entire globe, but are truncated here
to increase readability of the graphic. Notice that each region in the $q$-bit bucket is bifurcated
into two regions in the coorisponding $q+1$-bit bucket. Also, as is common, the smaller set of
$3$-bit buckets all have roughly the same area. However, the $6$-bit buckets differ vastly in size
such as the small area of region 21 and large area of region 51. As expected, the buckets
tend to clump up around the major cities in Ohio, such as the $6$-bit buckets 20-22, 28, 29
around Columbus.

Continuing with the census example in Ohio, Figure~\ref{fig:ohioEntropy} plots the average
entropy for a standard geohash and several balanced geohashes for census block population
and rental household data. The rental household data uses a geohash balanced on population,
and illustrates the degree to which our method is robust to changes in the underlying data
model. In both cases, as expected, the balanced geohashes have higher average entropies the
when more bits are used for balancing. There are significantly diminishing returns of
increasing $q$, however,
with the improvement from 2 to 3 bits being larger than the jump from 5 to 20. The entropy
from the rental unit-based data is generally quite similar to the population-based data,
indicating that the balanced hash is fairly robust to moderate changes in the generating
model. Rental units are correlated with population, but do exhibit spatial dependence with
a higher rate of renting in more urban and less affluent areas as well as spikes near
large universities.

With significantly higher levels of precision, across all of the hashes, the entropies
converge and the entropy per bit begins to decay to zero. The reason for this is that
there are less than $2^{19}$ census blocks in the entire state of Ohio, and even the standard
geohash manages to put every block's centroid into its own bucket after using at least
$35$-bits. Ultimately the discreteness of the data limits our ability to further increase
the entropy even with additional splits. Additionally, the $20$-bit balanced geohash would
in theory have a near perfect entropy per bit up to $20$-bits of hash length if the data were
continuous. In fact, the $20$-bit geohash at $15$-bits of precision (note that this is
indistingusihable from a $15$-bit geohash) has only and entropy of $0.9$ per hash pbit.

Figure~\ref{fig:allStatesEntropy} plots the entropy of three balanced geohashes for
$34$ other states at $15$-bits of precision. The leftmost panel shows the outcome of
only balancing on the first $5$-bits. The worst states here are Alaska, Hawaii, South
Dakota, and North Dakota. These are all states which have a very low population density
(for Hawaii, the density is low when averaged over the area of a bounding box for the
entire island chain). The rightmost panel shows the pure effects
of the discreteness of the data, as all of the x-values would otherwise be nearly perfect
values near $1$. Not surprisingly, the most populous state of California fairs the best
followed closesly by other large states such as Texas and Florida. The smallest states
have the worst entropy here, with DC being the worst at only $0.75$ average entropy per bit.
In all cases, the hash balanced on population performs similarly in regards to the number
of rental properties.

%%%%%%%%%%%%%%%%%%%%%%%%%%%%%%%%%%%%%%
\section{Future Extensions}  \label{sec:future}

We have presented an extension on the traditional geohash which is balanced on the data
volume rather than boxes of constant latitude and longitude. The benefits of this approach
have been shown for both spatial and spatial-temporal databases. Theorem~\ref{entropyThm}
establishes the robustness of our model-based approach due to random noise and an empirical
study of US Census data demonstrates robustness to model drift. The balanced geohash is
usable for indexing large databases as-is; in most applications it will perform most queries
in less time, with fewer scans, and lower overhead compared to a standard geohash.
There are several extensions of our work which would further increase the applicability
of the entropy balanced geohash.

Our balanced geohash $b^q_m(\cdot)$ has two free parameters which give the number of bits for which
the hash is balanced ($q$) and length of the entire hash ($m$). The latter is largely
application specific, but the choice of $q$ has some important efficency aspects. If $q$ is
too large the geohash can become quite noisy and overfit to the sample of data used to
learn the underlying model. Additionally, in order to map a standard geohash to a balanced
geohash we need to know the $2^q$ break points to the linear interpolation given in
Equation~\ref{balancedHash}. So if $q$ is large the memory overhead of the balanced
geohash may become prohibitive. In order to better estimate an optimal $q$, tighter two-way
theoretical bounds are needed as compared to those given in Theorem~\ref{entropyThm}. This
will likely require replacing the distribution-free approach employeed in our proof with a
specific family of models such as Gaussian mixtures. In the meantime we can safely pick a
small $q$ without concern, knowing that the noise level is low but that long hash prefixes
may be mildly unbalanced.

The paper \cite{han2013hgrid} provides a generic approach for interleaving geospatial
and temporal components utilizing the specific implementation optimizations present in
HBase. As mentioned, the geohash used in their paper can be seamlessly replaced by the
entropy balanced geohash. The exact method of choosing how to incorporate the time and
space dimensions are not specified as this will largely be a function of the specific
underlying dataset. In the case of the balanced geohash, since the data boxes are
already balanced by the distribution of the data, it should be possible to at least
specify a methodology for choosing how to interleave these components (with perhaps
one or two tuning parameters to tweak based on the specific use-cases). For now the
method of combining these two dimensions can be done fairly well manually, particularly
if the scheme is kept simple.

Finally, while the balanced geohash is fairly robust to model drift, a method akin to
self-balancing trees whereby the balanced geohash could adapt over time would be a
great addition to fixed hash presented here. This would be particularly important when
implementing a large $q$-bit balanced geohash. Possible methods for doing this include
adaptively splitting large prefixes and joining small prefixes; this would require
re-writting data, but would hopefully not be required very often. Alternatively a
scheme could be used to only balance new data as it arrives. In either case, care would
need to be taken to avoid large increases to the computational complexity or memory
overhead of an adaptive modification to the entropy maximized geohash design.

\nocite{*}
\bibliography{bhash}

\begin{thebibliography}{10}

\bibitem{bayer1972symmetric}
Rudolf Bayer.
\newblock Symmetric binary b-trees: Data structure and maintenance algorithms.
\newblock {\em Acta informatica}, 1(4):290--306, 1972.

\bibitem{beckmann1990r}
Norbert Beckmann, Hans-Peter Kriegel, Ralf Schneider, and Bernhard Seeger.
\newblock {\em The R*-tree: an efficient and robust access method for points
  and rectangles}, volume~19.
\newblock ACM, 1990.

\bibitem{chang2008bigtable}
Fay Chang, Jeffrey Dean, Sanjay Ghemawat, Wilson~C Hsieh, Deborah~A Wallach,
  Mike Burrows, Tushar Chandra, Andrew Fikes, and Robert~E Gruber.
\newblock Bigtable: A distributed storage system for structured data.
\newblock {\em ACM Transactions on Computer Systems (TOCS)}, 26(2):4, 2008.

\bibitem{chodorow2013mongodb}
Kristina Chodorow.
\newblock {\em MongoDB: the definitive guide}.
\newblock " O'Reilly Media, Inc.", 2013.

\bibitem{decandia2007dynamo}
Giuseppe DeCandia, Deniz Hastorun, Madan Jampani, Gunavardhan Kakulapati,
  Avinash Lakshman, Alex Pilchin, Swaminathan Sivasubramanian, Peter Vosshall,
  and Werner Vogels.
\newblock Dynamo: amazon's highly available key-value store.
\newblock In {\em ACM SIGOPS Operating Systems Review}, volume 41:6, pages
  205--220. ACM, 2007.

\bibitem{dvoretzky1956asymptotic}
Aryeh Dvoretzky, Jack Kiefer, and Jacob Wolfowitz.
\newblock Asymptotic minimax character of the sample distribution function and
  of the classical multinomial estimator.
\newblock {\em The Annals of Mathematical Statistics}, pages 642--669, 1956.

\bibitem{fox2013spatio}
Anthony Fox, Chris Eichelberger, James Hughes, and Skylar Lyon.
\newblock Spatio-temporal indexing in non-relational distributed databases.
\newblock In {\em Big Data, 2013 IEEE International Conference on}, pages
  291--299. IEEE, 2013.

\bibitem{fuchs2012accumulo}
Adam Fuchs.
\newblock Accumulo--extensions to google’s bigtable design.
\newblock Technical report, Technical report, National Security Agency, 2012.

\bibitem{ghemawat2003google}
Sanjay Ghemawat, Howard Gobioff, and Shun-Tak Leung.
\newblock The google file system.
\newblock In {\em ACM SIGOPS Operating Systems Review}, volume~37, pages
  29--43. ACM, 2003.

\bibitem{guttman1984r}
Antonin Guttman.
\newblock {\em R-trees: a dynamic index structure for spatial searching},
  volume~14.
\newblock ACM, 1984.

\bibitem{han2013hgrid}
Dan Han and Eleni Stroulia.
\newblock Hgrid: A data model for large geospatial data sets in hbase.
\newblock In {\em Proceedings of the 2013 IEEE Sixth International Conference
  on Cloud Computing}, pages 910--917. IEEE Computer Society, 2013.

\bibitem{lakshman2010cassandra}
Avinash Lakshman and Prashant Malik.
\newblock Cassandra: a decentralized structured storage system.
\newblock {\em ACM SIGOPS Operating Systems Review}, 44(2):35--40, 2010.

\bibitem{ma2013st}
Youzhong Ma, Yu~Zhang, and Xiaofeng Meng.
\newblock St-hbase: a scalable data management system for massive geo-tagged
  objects.
\newblock In {\em Web-Age Information Management}, pages 155--166. Springer,
  2013.

\bibitem{samet1985storing}
Hanan Samet and Robert~E Webber.
\newblock Storing a collection of polygons using quadtrees.
\newblock {\em ACM Transactions on Graphics (TOG)}, 4(3):182--222, 1985.

\bibitem{shvachko2010hadoop}
Konstantin Shvachko, Hairong Kuang, Sanjay Radia, and Robert Chansler.
\newblock The hadoop distributed file system.
\newblock In {\em Mass Storage Systems and Technologies (MSST), 2010 IEEE 26th
  Symposium on}, pages 1--10. IEEE, 2010.

\bibitem{stout1986tree}
Quentin~F. Stout and Bette~L. Warren.
\newblock Tree rebalancing in optimal time and space.
\newblock {\em Communications of the ACM}, 29(9):902--908, 1986.

\bibitem{taylor2010overview}
Ronald~C Taylor.
\newblock An overview of the hadoop/mapreduce/hbase framework and its current
  applications in bioinformatics.
\newblock {\em BMC bioinformatics}, 11(Suppl 12):S1, 2010.

\bibitem{zhong2013vegacache}
Yunqin Zhong, Jinyun Fang, and Xiaofang Zhao.
\newblock Vegacache: Efficient and progressive spatio-temporal data caching
  scheme for online geospatial applications.
\newblock In {\em Geoinformatics (GEOINFORMATICS), 2013 21st International
  Conference on}, pages 1--7. IEEE, 2013.

\bibitem{zhong2012distributed}
Yunqin Zhong, Jizhong Han, Tieying Zhang, and Jinyun Fang.
\newblock A distributed geospatial data storage and processing framework for
  large-scale webgis.
\newblock In {\em Geoinformatics (GEOINFORMATICS), 2012 20th International
  Conference on}, pages 1--7. IEEE, 2012.

\bibitem{zhong2012towards}
Yunqin Zhong, Jizhong Han, Tieying Zhang, Zhenhua Li, Jinyun Fang, and Guihai
  Chen.
\newblock Towards parallel spatial query processing for big spatial data.
\newblock In {\em Parallel and Distributed Processing Symposium Workshops \&
  PhD Forum (IPDPSW), 2012 IEEE 26th International}, pages 2085--2094. IEEE,
  2012.

\bibitem{zhong2012elastic}
Yunqin Zhong, Xiaomin Zhu, and Jinyun Fang.
\newblock Elastic and effective spatio-temporal query processing scheme on
  hadoop.
\newblock In {\em Proceedings of the 1st ACM SIGSPATIAL International Workshop
  on Analytics for Big Geospatial Data}, pages 33--42. ACM, 2012.

\end{thebibliography}

\end{document}